\begin{document}
\title[Approximate Convex Hulls]{Approximate Convex Hulls: sketching the convex hull using curvature}
\author{Robert Graham}
\author{Adam~M. Oberman}
\thanks{Department of Mathematics and Statistics, McGill University.  This work was supported by an NSERC Engage grant 2015}
\date{\today}
\begin{abstract}
Convex hulls are fundamental objects in computational geometry.  In moderate dimensions or for large numbers of vertices, computing the convex hull can be  impractical due to the computational complexity of convex hull algorithms.  In this article we approximate the convex hull in using a scalable algorithm which finds high curvature vertices with high probability.   The algorithm is particularly effective for approximating convex hulls which have a relatively small number of extreme points. 
\end{abstract}


\maketitle

\newcommand{\CH}{\text{CH}}
\newcommand{\red}[1]{{\color{red}{#1}}}


%

\section{Introduction}

Computing the convex hull of points is a fundamental problem in the field of computational geometry \cite{de2000computational}.
In moderate dimensions or for large numbers of vertices, computing the exact convex hull can be computationally impractical.  Even the vertex redundancy problem, which computes the extreme points without the full geometric structure of the convex hull is impractical 

Even relatively modern algorithms \cite{clarkson1994more} \cite{matouvsek1992linear} break down or are too expensive for high dimensional problems.  For practical purposes, in moderate dimensions, the computational obstruction is not with the algorithm, but with the convex hull itself.  However, in moderate or high dimensions, the combinatorial structure of high dimensional data sets can be very different from the geometrical intuition obtained from studying low dimensional simplices \cite{barany1989intrinsic} \cite{reitzner2005central}.

Nevertheless, there are many practical problems which are well represented by sampling a large number of high dimensional points, where we expect that either: (i) the number of extreme points is small or, more generally, (ii) the data is contained in a set which is the convex hull of a small number of points.   The second case applies, for example, to data sampled uniformly from a simplex in high dimensions.  This problem occurs hyperspectral data analysis \cite{winter1999n} \cite{winter2000comparison} \cite{boardman1993automating}. 

\begin{figure}[ht]
\includegraphics[scale= .35]{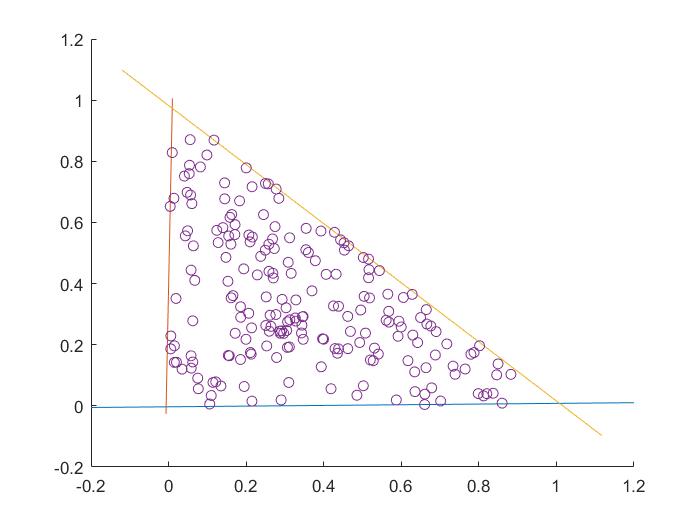}
\caption{Recovering the simplex from sampled points}
\label{fig:hypcomp_2dim}
\end{figure}
In figure \ref{fig:hypcomp_2dim}, we give an illustration of this problem, and of our algorithm.  Here 200 points were sampled uniformly from the 2 dimensional simplex. We used the hyperplane compression algorithm to approximate the original simplex.

Our work is motivated by an application in data reduction.  The algorithm was used to reduce the number of points in a helicopter flight test of Bell Helicopter Model 505 Jet Ranger X~\cite{MaximeCH}.
Approximately twenty million load vectors were recorded during the helicopter certification flight test. The goal
was to extract a small number of extremal loads to be applied on a flight representative fatigue test of the helicopter tailboom assembly. The
exact convex hull contained approximately two thousand points. The algorithm
identified approximately 200 high curvature points. These points were clustered
into a smaller number of points, and finally, using strain response of the tailboom to the load vectors, six load vectors were selected for the fatigue test. This robust data reduction method proved to be effective in identifying extremal loads and was instrumental for the timely certification of the recently released Bell Helicopter 505 Jet Ranger X.

The key idea in our approach is geometrically intuitive. Suppose we have a finite collection of points $X\subset \mathbb{R}^n$ for which we want to approximate the convex hull, denoted $CH(X)$. Given any unit vector $d\in S^{n-1}$, any extreme point
\[
x_d \in  \argmax \{x^\intercal d  \mid  x\in X\}
\]
is a vertex of $CH(X)$, moreover $x^\intercal d \leq x_d^\intercal d$ defines a  supporting hyperplane. Thus if we perform such computations for a large number of unit vectors, we obtain both a collection of vertices whose hull is contained in $CH(X)$ and a collection of hyperplanes whose intersection contains $CH(X)$. These determine inner and outer approximations to $CH(X)$. 

However, if we return to the example of points sampled from a simplex, near a vertex, there can be a high number of extreme points.   Our goal is to reduce the number of extreme points without introducing too large of an error in the convex hull.   The influential Pixel Purity Index (PPI) algorithm~\cite{boardman1993automating} keeps only vertices which are the extremal for more unit vectors, $d$.   In this article, we justify and refine the algorithm using high dimensional curvature concepts. 
The connection with curvature has been hinted at in \cite{theiler2000using}, where it was observed that points which maximize many direction vectors are ``presumed to be closer to the ``corners" of the data".  As far as we know, we are the first to make this explicit.  Using this observation we can prove consistency \ref{cor:consis} and convergences \ref{cor:1} of the algorithm. As an aside we give a theoretical answer to a question posed in \cite{chaudhry2006pixel}.

We extend the algorithm by also giving a method for reducing the number of hyperplanes \ref{alg:hyper}. We also briefly touch on the endmember detection problem: given a collection of points uniformly sampled from some polytope, how do we find the corners of the polytope (as opposed to the corners of the convex hull of the points).

Our paper is structured as follows. In Section 2 we explain how our method computes approximate curvature. Section 3 describes how to compute the error in our approximations. Section 4 describes our algorithm and its extensions. Section 5 uses  the results of Section 2 and 3 to show our algorithm is consistent and converges. Finally in section 6 we give some examples.

\begin{remark}
Note there is a large body of work on approximately convex bodies, see for example~\cite{barvinok2014thrifty} and \cite{ball1997elementary}.  There the idea is to, for example, find the best ellipsoid inside or containing a convex body, where a scaling factor is allowed.  The goal of this work it to select extreme vertices in a computationally practical fashion, which is somewhat different from those works. 	
\end{remark}


\section{Consistency of the curvature approximation}

Here we define the basics and show how one can compute the curvature of polytopes. This is the essence of our algorithm defined in a later section.

\begin{definition}
 A supporting hyperplane at $v$  for the convex set $P$ is a hyperplane with normal $n$ such that 
 \[
 n^\intercal y \leq n^\intercal v, \quad  \text{ for all $y\in P$}. 
 \]
\end{definition}
 Let $S^{n-1}$ be the unit sphere in $\Rn$
\begin{definition}
Let $N_v\subset S^{n-1}$ be the set of all unit normals of supporting hyperplanes at $v$ for the convex set $P$ . We view the normal vectors as being points on the sphere. The \emph{curvature} of $P$ at $v$ is the spherical volume of $N_v$. The \textit{relative curvature} of $P$ at $v$ is 
\[
K(v) = \frac{ \vol(N_v) }{ \vol(S^{n-1})}.
\]
\end{definition}

\begin{remark}
In two dimensions the curvature is a measure of the exterior angle at a point. In general, curvature is viewed as a $n$ dimensional notion of spherical angle.	 An introduction to curvature of polytopes can be found in \cite[p.~241]{pak2008lectures}. We note this definition applies just as well to any convex body~\cite{schneider2013convex}. For smooth bodies, the curvature of a subset $V$ of a convex body is $\vol ( \bigcup_{v\in V} N_v) $.  
\end{remark}

We can now show how to approximately measure the curvature of polytopes.

\begin{definition}
Let $D \subset S^{n-1}$ be finite and let $V \subset \Rn$ also be finite.  
Let 
\[ 
D_v := \{ d\in D \mid  v^\intercal d  \ge w^\intercal d  \text{ for all $w \in V$ } \} 
\] 
be the set of extremal directions in $D$ for $v$, and let 
\[
CH_{D}(V) = \{ v \in V \mid \abs{D_v} > 0 \}
\]
be the $D$-convex hull of $V$. 
Define the relative $D$-curvature of $S$ at a $D$-extremal point $v$ to be
\[
K_D(v) = \frac{\Abs{D_v}}{ \Abs{D}}
\]
\end{definition}

\begin{theorem}[Consistency of the curvature approximation]\label{thm:consistency}
Let  $D$ be a finite set of vectors uniformly sampled from $S^{n-1}$.  
Then for any $\epsilon >0 $
\[ 
\Prob\left ( \Abs{ K_D(v) - K(v) } >\epsilon \right ) \leq \frac{K(v)(1-K(v))}{ \abs{D} \epsilon ^2} 
\]
\end{theorem}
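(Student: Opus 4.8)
The plan is to recognize the empirical quantity $K_D(v)$ as the sample mean of independent Bernoulli indicators, and then to apply Chebyshev's inequality; its right-hand side turns out to be exactly the stated bound. Throughout I write $P = \CH(V)$ and regard $D = \{d_1,\dots,d_{\abs{D}}\}$ as $\abs{D}$ independent draws from the uniform distribution on $S^{n-1}$, with $\abs{D}$ fixed.

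First I would make precise the link between the $D$-curvature and the true curvature. A direction $d\in S^{n-1}$ lies in $N_v$ precisely when $v$ maximizes $y\mapsto y^\intercal d$ over $P$, and since a linear functional on a polytope attains its maximum at a vertex, this is equivalent to $v^\intercal d \ge w^\intercal d$ for all $w\in V$, i.e.\ to $d$ belonging to $D_v$. The two conditions can differ only on the set of directions for which two or more vertices tie as maximizers; this tie set is contained in the union of the relative boundaries of the finitely many normal cones $N_w$, $w\in V$, which is a finite union of lower-dimensional spherical pieces and hence has spherical measure zero. Consequently, with probability one each $d_i$ lies in the interior of exactly one normal cone, and $D_v = \{\, d_i \in D : d_i \in N_v \,\}$, this identity holding whether or not $v$ is actually $D$-extremal.

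Next I set $Y_i = \mathbf{1}[d_i\in N_v]$. By the definition of relative curvature, $\Prob(d_i\in N_v) = \vol(N_v)/\vol(S^{n-1}) = K(v)$, so the $Y_i$ are i.i.d.\ Bernoulli$(K(v))$. The identification above gives $K_D(v) = \abs{D_v}/\abs{D} = \frac{1}{\abs{D}}\sum_{i=1}^{\abs{D}} Y_i$, the empirical mean of the $Y_i$. Therefore $\mathbb{E}[K_D(v)] = K(v)$, and by independence $\mathrm{Var}(K_D(v)) = \mathrm{Var}(Y_1)/\abs{D} = K(v)(1-K(v))/\abs{D}$.

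Finally, Chebyshev's inequality applied to $K_D(v)$ gives $\Prob(\abs{K_D(v)-K(v)}>\epsilon) \le \mathrm{Var}(K_D(v))/\epsilon^2$, which is exactly the claimed bound. The only genuine subtlety, and the step I would be most careful to justify, is the measure-zero argument identifying $D_v$ with $D\cap N_v$: one must confirm that ties among maximizing vertices occur on a spherical null set, so that the $Y_i$ really do have the asserted Bernoulli law. Everything after that is the standard second-moment derivation, and no concentration machinery beyond Chebyshev is needed.
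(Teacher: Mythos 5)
Your proposal is correct and follows essentially the same route as the paper's proof: both identify $D_v$ with $D\cap N_v$ up to a measure-zero tie set, recognize $\abs{D_v}$ as a sum of i.i.d.\ Bernoulli$(K(v))$ indicators (the paper phrases this as a multinomial count), and conclude via Chebyshev's inequality with $\mathbb{E}[K_D(v)]=K(v)$ and $\mathrm{Var}(K_D(v))=K(v)(1-K(v))/\abs{D}$. Your treatment of the tie set (via boundaries of normal cones) is if anything slightly more careful than the paper's (which argues via supporting hyperplanes containing the segment $vw$), but it is the same idea.
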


\begin{proof}

First note that since every direction vector is maximized by some vertex of a polytope we know the sets $\{ N_v \big| v\in V \}$ cover the sphere. On the other hand for $w\neq v$, $N_v \cap N_w$ has measure zero. 
This follows since for a given normal vector in $N_v \cap N_w$, the corresponding supporting hyperplane contains the line segment between $v$ and $w$. Consequently the dimension of this set must have measure zero. 

Hence, any $d_i$ has probability $K(v)$ of satisfying $d_i \in N_v$.  Moreover, for each $d_i$ belongs to $N_v$ for some $v$ (or with probability $0$ it is maximized by more than one). Therefore $\Abs{D_v}$ follows a multinomial distribution; every $d_i$ has probability $K(v)$ of belonging to $N_v$. Now, given that $K_D (v) = \frac{\Abs{D_v}}{ \Abs{D}}$, Chebyshev's inequality states that
 \[ \Prob \left ( | K_D(v) - E(K_D(v)) | >\epsilon \right ) \leq \frac{var( K_D(v))}{ \epsilon ^2} \]
and since $\Abs{D_v}$ follows a multinomial distribution $E(K_D(v))=K(v)$ and $var(K_D(v)) = K(v) (1- K(v))/\Abs{D}$. The result follows immediately. 
\end{proof}

Figure \ref{fig:curvature} shows the convergence of $K_D (v)$ for a given vertex as $|D|$ increases.

\begin{figure}[h]
\includegraphics[scale = .35]{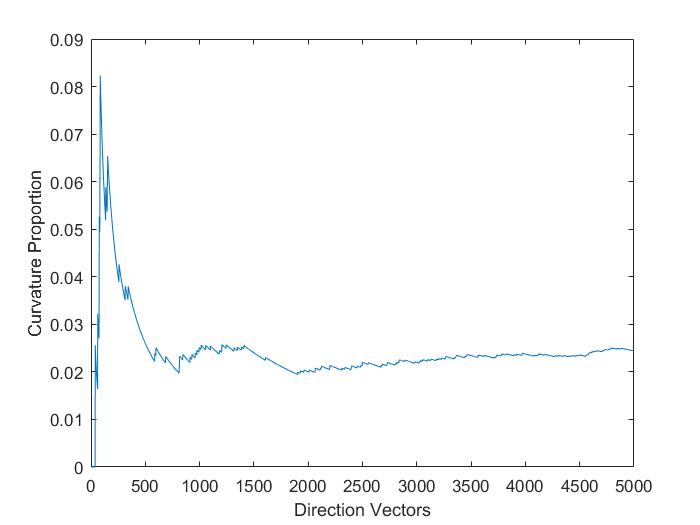}
\label{fig:curvature}
\caption{Convergence of $K_D (v)$ for a given vertex as $|D|$ increases. }
\end{figure}

\section{Sparse approximation of polytopes}

Here we show that removing low curvature extreme points from a polytope does not significantly change its shape (more precisely stated below).   The main result is related to the Aleksandrov's maximum principle \cite[p. 12]{gutierrez2012monge}

\begin{definition}
Let $d(S,S')$ denote the Hausdorff distance between two sets
\[
d(S,S') \equiv \max \{ \sup_{x\in S} d(x,S'), \sup_{y\in S'} d(y,S)   \}
\]
\end{definition}
We need a simple lemma about the Hausdorff distance between two polytopes.
\begin{lemma}\label{lem:SimpHaus} Let $S = CH(\{v_1,...v_m \})$ and let $S'$ be convex and compact then 
\[ \sup_{x\in S} d(x,S') = \sup_{1\leq i \leq m} d(v_i,S')\]
\end{lemma}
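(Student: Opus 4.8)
The plan is to recognize this lemma as an instance of the classical fact that a convex function on a polytope attains its maximum at a vertex, applied to the function $f(x) = d(x,S')$. Since $S = CH(\{v_1,\dots,v_m\})$ is the convex hull of its vertices $v_1,\dots,v_m$, everything will follow once I show that $f$ is convex on $\mathbb{R}^n$.

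First I would establish that $f(x) = d(x,S') = \inf_{y\in S'}\|x-y\|$ is convex, and this is the step where the hypothesis that $S'$ is convex is essential. Because $S'$ is compact (closedness alone suffices), for each $x$ there is a nearest point $\pi(x)\in S'$ attaining the infimum. Given $x_1,x_2$ and $\lambda\in[0,1]$, convexity of $S'$ guarantees that $\lambda\pi(x_1)+(1-\lambda)\pi(x_2)\in S'$, so this point is an admissible competitor for the distance from $\lambda x_1+(1-\lambda)x_2$. The triangle inequality then gives
\[
f(\lambda x_1+(1-\lambda)x_2)\ \le\ \big\|\lambda(x_1-\pi(x_1))+(1-\lambda)(x_2-\pi(x_2))\big\|\ \le\ \lambda f(x_1)+(1-\lambda)f(x_2),
\]
which is exactly convexity of $f$.

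Second, I would take an arbitrary $x\in S$ and write it as a convex combination $x=\sum_i \lambda_i v_i$ of the vertices, with $\lambda_i\ge 0$ and $\sum_i\lambda_i=1$. Jensen's inequality applied to the convex function $f$ yields $f(x)\le \sum_i \lambda_i f(v_i)\le \max_i f(v_i)$, and taking the supremum over $x\in S$ gives $\sup_{x\in S} f(x)\le \max_{1\le i\le m} f(v_i)$. The reverse inequality is immediate, since each $v_i\in S$ forces $\max_i f(v_i)\le \sup_{x\in S} f(x)$. Combining the two inequalities produces the claimed equality.

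The only real content is the convexity of $f$, and the hypothesis cannot be dropped: convexity of $S'$ is precisely what makes the midpoint of the two projections an admissible competitor, and without it $f$ need not be convex. Everything else is routine. A minor point to keep clean is the existence of nearest points so that the infimum defining $d(x,S')$ is genuinely attained; this is guaranteed here by compactness of $S'$, though one could instead argue with near-minimizers and a limiting argument if only closedness were assumed.
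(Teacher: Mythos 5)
Your proof is correct, but it takes a genuinely different route from the paper. You reduce the lemma to the classical fact that a convex function on a polytope attains its maximum at a vertex: you prove $f(x)=d(x,S')$ is convex (using convexity of $S'$ to make $\lambda\pi(x_1)+(1-\lambda)\pi(x_2)$ an admissible competitor), then apply Jensen's inequality to a convex-combination representation of $x\in S$. The paper instead argues geometrically: it takes a farthest point $W\in S$, its nearest point $A\in S'$, shows the hyperplane through $A$ with normal $WA$ supports $S'$, and then uses the fact that some vertex $v_i$ must lie on the far side of the parallel hyperplane through $W$ (else $W$ could not be a convex combination of the $v_i$), deriving a contradiction if every vertex were strictly closer to $S'$ than $W$. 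Your argument is shorter, more standard, and avoids the paper's somewhat delicate case analysis (and its reliance on a figure); it also immediately generalizes to any convex function in place of $d(\cdot,S')$. What the paper's approach buys is that its construction (farthest point, nearest-point projection, supporting hyperplane with normal $WA$) is exactly the machinery reused in the proof of Theorem \ref{thm:Alex}, so the geometric proof serves as a warm-up for the main estimate, whereas your argument, while cleaner for the lemma in isolation, does not set up that later proof. One small stylistic note: since you only need convexity of $f$, the attainment of the infimum (hence compactness of $S'$) is not essential to your argument, as you correctly observe; an infimum-plus-$\epsilon$ competitor works verbatim.
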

\begin{proof}
Let $W$ be a point in $S$ such that $d := d(W,S')$ is maximal. Let $A$ be a point in $S'$ such that $dist(W,A)= d$ (which exists by compactness).  Consider the hyperplane $\mathcal{H}$ at $A$ with normal $WA$. We claim this is a supporting hyperplane at $A$ with respect to the polytope $S'$. Suppose otherwise, then there exists a point $B\in S'$ that belongs to the same side of the hyperplane as $W$. But then the line $AB$ belongs to $S'$ and this line will intersect the ball of radius $d$ centered at $W$ and so there is a point in $S'$ that is closer to $W$ then $A$: contradiction. The following figure makes this clear (the circle pictured is of radius $d$).

\includegraphics{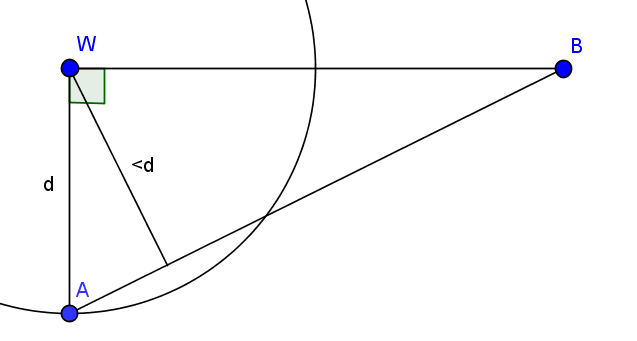}

On the other hand there must be some $w_i$ on or above (i.e. not on the same side as $A$) the hyperplane at $W$ with normal $WA$ since $w_i$ are the extreme points of $S$. If $w_i$ is on the hyperplane then $d(w_i,S')\geq d$ and we are done. Otherwise, if it is above, then since $d(w_i,S')\leq d$ there exists a point $A_0$ in $S'$ such that $d(W,A_0) \leq d$ and so $A_0$ is on the wrong side of $\mathcal{H}$ giving the above contradiction.
\end{proof}
 
	Recall for an angle $\theta \in [0,\pi ]$ and $v\in S^{n-1}$ a spherical cap with angle theta about $v$ is the set 
	\[ \{w\in S^{n-1} \big| w^\intercal v \geq \cos (\theta ) \} \]
	
	Let $S_{cap}(\theta)$ denotes the volume of the $n$ dimensional spherical cap with angle $\theta$ (about any $v$ since this will not change the volume). 
	
	\begin{lemma}. \label{lem:ball} Let $\theta \in [0,\pi ]$, then in $S^{n-1}$
	\[ \frac {1}{2} \left( \sin( \frac {\theta}{2} ) 
\right)^{n-1} \leq S_{cap} (\theta) 	\]
	\end{lemma}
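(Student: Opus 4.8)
The plan is to reduce the statement to a one-variable integral inequality and then settle it by a monotonicity (shape) argument. Interpreting $S_{cap}(\theta)$ as spherical volume normalized so that the whole sphere has measure one (consistent with the relative-curvature normalization used above), I would place the cap about the north pole and use spherical coordinates, in which the spherical measure factors as $\sin^{n-2}\phi\,d\phi\,d\sigma$, where $\phi$ is the polar angle from the axis and $\sigma$ is the measure on the equatorial $S^{n-2}$. Integrating out the $S^{n-2}$ factor, which cancels between the cap and the whole sphere, yields
\[
S_{cap}(\theta)=\frac{\int_0^{\theta}\sin^{n-2}\phi\,d\phi}{\int_0^{\pi}\sin^{n-2}\phi\,d\phi}.
\]
Thus it suffices to prove $g(\theta)\ge 0$ on $[0,\pi]$, where
\[
g(\theta):=\int_0^{\theta}\sin^{n-2}\phi\,d\phi-\frac12\left(\sin\frac{\theta}{2}\right)^{n-1}I,\qquad I:=\int_0^{\pi}\sin^{n-2}\phi\,d\phi.
\]

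The whole argument then hinges on the shape of $g$ between its two endpoint values $g(0)=0$ and $g(\pi)=\tfrac12 I>0$. Differentiating and using $\sin\theta=2\sin\frac{\theta}{2}\cos\frac{\theta}{2}$, one factors
\[
g'(\theta)=\left(\sin\frac{\theta}{2}\right)^{n-2}\cos\frac{\theta}{2}\left[\,2^{\,n-2}\left(\cos\frac{\theta}{2}\right)^{n-3}-\frac{(n-1)I}{4}\,\right].
\]
The prefactor is nonnegative on $[0,\pi]$, so the sign of $g'$ is governed entirely by the bracket $h(\theta)$. For $n\ge 4$ the term $(\cos\frac{\theta}{2})^{n-3}$ decreases monotonically from $1$ to $0$, so $h$ is decreasing with $h(\pi)=-\tfrac{(n-1)I}{4}<0$; it cannot be negative on all of $[0,\pi]$, since then $g$ would be nonincreasing, contradicting $g(\pi)>g(0)$. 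Hence $h$ changes sign exactly once, from $+$ to $-$, and $g$ is increasing then decreasing.

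The payoff is immediate: a function that increases on $[0,c]$ and decreases on $[c,\pi]$ satisfies $g\ge\min\{g(0),g(\pi)\}=0$ throughout, which is the claim. I expect the main obstacle to be the sign analysis of $h$ together with the low-dimensional edge cases where the exponent $n-3$ is nonpositive: for $n=3$ the bracket is the positive constant $2-\tfrac{I}{2}=1$, so $g$ is simply increasing, and $n=2$ I would dispatch directly, where the inequality reduces to $\tfrac12\sin\frac{\theta}{2}\le \theta/\pi$, checked by noting the difference vanishes at $0$ and has nonnegative derivative $\tfrac1\pi-\tfrac14\cos\frac{\theta}{2}$.

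As an aside, a cruder and more geometric route is tempting: project the cap orthogonally onto the equatorial hyperplane and bound its area below by that of the shadow disk of radius $\sin\theta$. This explains the appearance of the chord $2\sin\frac{\theta}{2}$ and works for small $\theta$, but the projection collapses too much area near $\theta=\pi/2$ to recover the constant $\tfrac12$; this is why I would commit to the integral estimate above rather than the projection bound.
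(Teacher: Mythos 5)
Your proof is correct, but it is genuinely different from the paper's: the paper does not prove the lemma at all, it simply observes that the statement is a rewriting of Lemma 2.3 in Ball's \emph{An Elementary Introduction to Modern Convex Geometry}, which says that a cap of \emph{chordal} radius $r$ on $S^{n-1}$ has normalized measure at least $\frac12\left(\frac r2\right)^{n-1}$; since the chord subtended by angle $\theta$ has length $r=2\sin\frac\theta2$, Ball's bound is literally the displayed inequality (and Ball's own proof is a geometric volume comparison of cones, not calculus). What you do instead is a self-contained one-variable argument: the exact formula $S_{cap}(\theta)=\int_0^\theta\sin^{n-2}\phi\,d\phi\,\big/\int_0^\pi\sin^{n-2}\phi\,d\phi$, followed by the factorization of $g'$ and the unimodality argument ($g$ increases then decreases, so $g\ge\min\{g(0),g(\pi)\}=0$), with the edge cases $n=2,3$ dispatched directly. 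I checked the factorization, the sign analysis of the bracket for $n\ge4$, the constant bracket $2-\tfrac I2=1$ for $n=3$ (using $I=2$), and the $n=2$ reduction to $\tfrac12\sin\tfrac\theta2\le\tfrac\theta\pi$; all are correct, and your choice of the normalized (relative) measure is the right reading of $S_{cap}$ --- indeed the unnormalized version would fail for large $n$ since $\vol(S^{n-1})\to0$. What each approach buys: the paper's citation is a one-liner but leaves the reader to perform the chord--angle translation and to sort out the normalization convention; your argument is longer but fully verifiable in place, needs no external reference, and makes the low-dimensional cases explicit, where the exponent $n-3$ in the bracket would otherwise cause worry.
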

\begin{proof}
 A straightforward rewriting of  \cite[lemma 2.3]{ball1997elementary}
\end{proof}

Let $B_r$ denote the ball of radius $r$ in $\Rn$. 

\begin{theorem}[Aleksandrov's maximum principle]\label{thm:Alex} Let $V = \{v_1,...v_m \}$ and $W = \{w_1,...,w_k,\}$ be subsets of  $B_r$. Let $S = CH(V \cup W)$ and $S' = CH(V)$, suppose neither are degenerate.  Let $\omega$ be the sum of the curvatures of all the points $w \in W$. Then 
\begin{equation}\label{spherical.cap}
S_{cap}\left(\arcsin\left(\frac{d(S,S')}{2r}\right)\right)\leq\omega 	
\end{equation}
and 
\begin{equation}\label{distance.estimate}
	d(S,S') \le \sqrt{2} \pi r(2\omega)^\frac{1}{n-1}
\end{equation}
\end{theorem}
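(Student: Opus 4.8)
The plan is to reduce both estimates to a single covering argument on the sphere, driven by the point of $W$ that protrudes farthest from $S'$. First I would simplify the Hausdorff distance. Since $S' = CH(V) \subseteq CH(V\cup W) = S$, the term $\sup_{y\in S'} d(y,S)$ vanishes, and applying Lemma \ref{lem:SimpHaus} to $S = CH(\{v_1,\dots,v_m,w_1,\dots,w_k\})$ gives $\sup_{x\in S} d(x,S') = \max\{\max_i d(v_i,S'),\ \max_j d(w_j,S')\}$. Each $v_i$ lies in $S'$, so $d(v_i,S')=0$ and therefore $d(S,S') = \max_j d(w_j,S')$. I would fix a maximizer $w_0\in W$, write $\delta := d(w_0,S') = d(S,S')$, let $A$ be its nearest point in $S'$, and set $u := (w_0-A)/\delta \in S^{n-1}$. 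The standard projection inequality for the nearest point of a convex set then gives $u^\intercal(x-A)\le 0$ for every $x\in S'$.

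Next I would exhibit a spherical cap of normal directions, and this is the geometric heart of the proof. The core claim is that the cap $C := \{\, d\in S^{n-1} : \angle(d,u)\le \arcsin(\delta/2r)\,\}$ is contained in $\bigcup_{w\in W} N_w$, where $N_w$ is the normal cone of $S$ at $w$. For each extreme point $v_i$ of $S'$ set $g_i := w_0 - v_i$. From $u^\intercal(A-v_i)\ge 0$ I obtain $u^\intercal g_i = \delta + u^\intercal(A-v_i)\ge \delta$, while $|g_i|\le 2r$ since $w_0,v_i\in B_r$; hence the angle $\beta_i$ between $u$ and $g_i$ satisfies $\cos\beta_i \ge \delta/2r$, i.e. $\beta_i\le \arccos(\delta/2r)$. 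For any $d\in C$ the triangle inequality for geodesic distance on $S^{n-1}$ gives $\angle(d,g_i)\le \beta_i + \arcsin(\delta/2r)\le \arccos(\delta/2r)+\arcsin(\delta/2r)=\pi/2$, so $d^\intercal g_i\ge 0$, that is $d^\intercal w_0\ge d^\intercal v_i$. Taking the maximum over $i$ yields $d^\intercal w_0\ge \max_{v\in V} d^\intercal v = \max_{x\in S'} d^\intercal x$, so the maximizer of $d^\intercal x$ over $V\cup W$ can always be taken in $W$, proving $d\in\bigcup_{w\in W}N_w$. As in the proof of Theorem \ref{thm:consistency} the cones $\{N_w\}$ meet pairwise in sets of measure zero, so $\vol\big(\bigcup_{w\in W}N_w\big)=\sum_{w\in W}\vol(N_w)=\omega$. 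Since $C\subseteq\bigcup_{w\in W}N_w$ this gives $S_{cap}(\arcsin(\delta/2r))=\vol(C)\le \omega$, which is \eqref{spherical.cap}. I expect the main obstacle to be exactly this step: correctly pinning the angular half-width of the cap at $\arcsin(\delta/2r)$ and proving the estimate $\cos\beta_i\ge \delta/2r$ that makes the cap fit inside the union of normal cones; the complementarity $\arccos(\delta/2r)+\arcsin(\delta/2r)=\pi/2$ is what closes the argument cleanly.

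To obtain \eqref{distance.estimate} I would feed \eqref{spherical.cap} into the cap-volume bound of Lemma \ref{lem:ball} with $\theta=\arcsin(\delta/2r)$, giving $\tfrac12\big(\sin(\tfrac12\arcsin(\delta/2r))\big)^{n-1}\le \omega$ and hence $\sin(\tfrac12\arcsin(\delta/2r))\le (2\omega)^{1/(n-1)}$. It then remains to invert this for $\delta$: writing $\phi=\arcsin(\delta/2r)\in[0,\pi/2]$ and using the elementary inequality $\sin\phi\le 2\sin(\phi/2)$ together with $\delta=2r\sin\phi$ gives $d(S,S')\le 4r(2\omega)^{1/(n-1)}$, and after absorbing the numerical factor (noting $4\le\sqrt2\,\pi$) this yields $d(S,S')\le \sqrt2\,\pi r\,(2\omega)^{1/(n-1)}$. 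Once the cap containment of the second paragraph is in hand, both the covering step and this final inversion are routine applications of the measure-zero overlap from Theorem \ref{thm:consistency} and of Lemma \ref{lem:ball}.
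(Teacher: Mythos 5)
Your proof is correct, and it takes a genuinely different (and arguably cleaner) route at one structural point. The paper's proof of \eqref{spherical.cap} has two stages: it first reduces to the case $k=1$, by taking the farthest point of $W$ from $S'$ (via Lemma \ref{lem:SimpHaus}), passing to $S''=CH(V\cup\{w_i\})$, and arguing by an accounting of supporting hyperplanes that the curvature of $w_i$ in $S''$ is at most $\omega$; it then handles the single-point case by a \emph{primal} inclusion of cones, namely that the part of $S$ on the $W$-side of the supporting hyperplane $\mathcal{H}$ at $A$ is contained in the circular cone with apex $W$ and base the ball of radius $\sqrt{(2r)^2-d^2}$ about $A$ in $\mathcal{H}$, whose normal cone at the apex is exactly the cap of angular radius $\arcsin(d/2r)$. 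Your argument is the \emph{dual} of this picture and dispenses with the reduction altogether: you exhibit the same cap $C$ around $u=(w_0-A)/\delta$ and show, via the projection inequality, the estimate $\cos\beta_i\ge\delta/2r$, and the spherical triangle inequality, that every $d\in C$ is maximized over $V\cup W$ by a point of $W$, so that $C\subseteq\bigcup_{w\in W}N_w$ and the cap volume is at most $\omega$. This handles all of $W$ in one stroke; note that plain subadditivity of volume already suffices here, so the measure-zero-overlap fact you import from Theorem \ref{thm:consistency} is not actually needed. Your derivation of \eqref{distance.estimate} also inverts the cap bound slightly differently (using $\sin\phi\le 2\sin(\phi/2)$ and $\delta=2r\sin\phi$ rather than the paper's Jordan-type bound on $\arcsin$), and in fact yields the better constant $4r$ before you weaken it to $\sqrt{2}\pi r$ to match the statement. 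What each approach buys: yours is shorter, self-contained, and avoids the hyperplane-accounting step; the paper's primal cone comparison is more pictorial, and its reduction step isolates a reusable fact about how curvature can only decrease when the other removed points are discarded. Both proofs share the remaining skeleton: Lemma \ref{lem:SimpHaus} to locate the farthest point, the supporting hyperplane at the nearest point $A$, and Lemma \ref{lem:ball} to convert the cap estimate into the distance estimate.
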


\begin{remark}
The result above gives an upper bound on the distance of order $\left( \omega \vol (S^{n-1})\right)^{\frac{1}{n-1}} $ where $\omega$ is the total relative curvature removed from the set. In Figure~\ref{fig:2} we plot this function for $n=2,3,4,5$ and $r = 1$. Note since $2$ is a trivial worst case this shows how the estimate is only useful for small $\omega$
\end{remark}
\begin{figure}
\includegraphics[scale= .35]{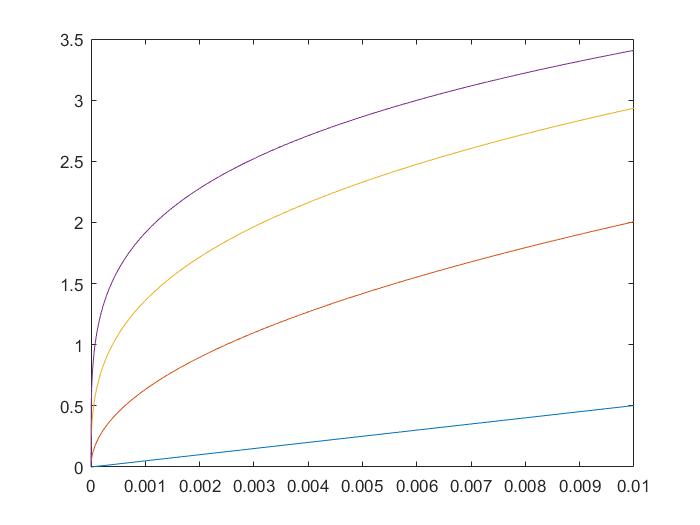}
\caption{  Plot of the worst case distance between the sets  as a function of the relative curvature removed, $\omega$, for $n=2,3,4,5$ and $r = 1$ (the error is increasing in $n$). 
$\sqrt{2} \pi r\left (2\omega\vol (S^{n-1})\right)^\frac{1}{n-1}$ }
\label{fig:2}
\end{figure}

\begin{proof}[Proof of theorem \ref{thm:Alex}]

Let $d = d(S,S')$. First we show how to obtain \eqref{distance.estimate} from \eqref{spherical.cap}.   From lemma \ref{lem:ball}
\[
\frac {1}{2} \left( \sin \left( \frac{\arcsin(\frac{d}{2r})}{2} \right) \right)^{n-1} \leq S_{cap}\left(\arcsin \left( \frac{d}{2r} \right) \right) \leq \omega 
\] 
hence: 
\[ \sin \left( \frac{\arcsin(\frac{d}{2r})}{2} \right) \leq 2 (2\omega)^\frac{1}{n-1} \]
Finally we use the inequalities
 \[\sin \left( \frac{\arcsin(\frac{d}{2r})}{2} \right) \geq \frac {4}{\sqrt{2}\pi } \left(\frac{\arcsin(\frac{d}{2r})}{2}\right)\] 
 and 
\[ \frac {4}{\sqrt{2}\pi } \arcsin \left( \frac{d}{2r} \right) \geq \frac {4}{\sqrt{2}\pi } \left(\frac{d}{2r}\right)\] 
to complete the proof of this step.

Next we prove \eqref{spherical.cap}.  First we show it suffices to consider the case where $k=1$.  Suppose we proved the theorem for $k=1$. Let $w_i$ be the point in $S$ furthest away from $conv(\{v_1 ,...v_m\})$ which exists by Lemma \ref{lem:SimpHaus}. But by assumption, $S_{cap}\left(\arcsin\left(\frac{d(w_i,S')}{2r}\right) \right) \leq \omega'$  where $\omega'$ is the curvature of $w_i$ with respect to $S'':=conv(w_i,v_1,...v_n)$. Thus it would be enough to show $\omega' \leq \omega$. But note every supporting hyperplane for some $w_j$ with respect to $S$ will either be a supporting hyperplane for some $v_\ell$ or for $w_i$ with respect to $S''$, on the other hand any supporting hyperplane for some $v_j$ with respect to $S$ will remain a supporting hyperplane of $v_j$ with respect to $S''$. From this we conclude that $\omega' \leq \omega$ as required. 

It remains to show the case $k=1$. Rename $w_1$ to $W$ for clarity, so $S=conv(W,v_1,...v_n)$, note $d = d(W,S')$.  Let $A$ be the unique point in $S'$ such that $dist(A,W)=d$ (which exists by compactness of convex hulls). Let  $\mathcal{H}$ be the hyperplane at $A$ with normal $WA$. Note that exactly as in the proof of  \ref{lem:SimpHaus} $\mathcal{H}$ is a supporting hyperplane for $S'$

Consider the cone $C_0$ consisting of vertex $W$ and base $S\cap \mathcal{H}$. In other words the portion of $S$ on the same side of the hyperplane as $W$. Let $C_1$ be the cone consisting of vertex $W$ and base the ball of radius  $\sqrt{ (2r)^2 - d^2}$ centered at $A$ in  $\mathcal{H}$. We claim $C_0\subset C_1$, it suffices to show $S\cap \mathcal{H}$ is contained in the ball around $A$. Note by assumption all points of $S$ are in a ball of radius $r$, in particular all points of  $S\cap \mathcal{H}$ are within $2r$ of $W$. Moreover the slant of $C_1$ is length $2r$. So if any point in $S\cap \mathcal{H}$ were outside the ball around $A$ it would be a distance $>2r$ from $W$: contradiction.

Finally, since $C_0\subset C_1$ it follows that the curvature of $W$ with respect to $C_0$ is less then or equal to the curvature of $W$ with respect to $C_1$ but this is precisely:
\[ 
S_{cap}\left(\arcsin ( \frac{d}{2r})\right) \leq\omega 
\qedhere
\]

\end{proof}


\section{The algorithm}

Algorithm \ref{alg:Basic} is our basic approximate convex hull algorithm. In short, we generate many direction vectors and use these to compute curvature as described above. We then keep only the high curvature points.
\begin{algorithm}
\label{alg:Basic}
\SetKwInput{Input}{Input}
\SetKwInput{Output}{Output}

\caption {Basic Algorithm}
\Input  {A finite collection of points $V\subset \mathbb{R} ^n $ and a threshold $0\leq \alpha\leq 1$. A set  $D\subset S^{n-1}$ of direction vectors, generated uniformly from the unit sphere (see below for some alternatives)}
\Output {$V' \subset V$ an approximate convex hull (called the inner hull). modified slightly this can output a collection of constraints that determine an approximate convex hull (called the outer hull), see below for details. }
\Begin
{
	\ForAll {$d \in D$}
	{
		\ForAll {$v \in V$}
			{
			$a_{v,d} \leftarrow v^\intercal d$
			}
			Find $w$ such that $a_{w,d}$ is max
			
			$Count_w \leftarrow Count_w + 1$
	}

\ForAll {$v \in V$} 
{$K_D (v) \leftarrow Count_v / |D|$

\If {$K_D (v) \leq \alpha$} {Delete $v$ from $V$ 

(Optional) Alternatively for everything below the threshold remove the $v$ `proportionally'. For example we may remove with probability $1 - \frac{K_D(v)}{\alpha}$.}

}
The output is the remaining elements of $V$ denoted $V'$
}
\end{algorithm}

\begin{definition}
Suppose we run algorithm \ref{alg:Basic} on sets $V$ and $D$ as above. Let $V'$ be all the vectors that were kept. We call $CH(V')$ the \emph{inner hull}. Now for each $d\in D$ let $v_d\in V$ be a vector that maximizes the dot product. Consider the collection of linear constraints 
\[d^\intercal x \leq d^\intercal v_d\]
This determines a convex body we call the \emph{outer hull}. 
\end{definition}
Note the inner hull is contained in the actual convex hull of $V$, which is contained in the outer hull. Our algorithm also gives the constraints for the outer hull. In step 2 one simply needs to keep track of the value $d ^\intercal v_d$ (i.e $a_{v,d}$) for each $d\in D$ when computing the maximums. There is a possibility the constraints will not define a finite polytope if $|D|$ is too small but this is exceedingly unlikely for large 
$|D|$. An even larger concern is that the outer hull contains a large number of constraints, we will attempt to remedy this later. 

We claim that both the inner and outer hull approach the actual convex hull if one increases the number of direction vectors. Thus we get an approximate convex hull in the vertex format and another approximate convex hull in the constraint format. The following notion of error will help make this precise. 

\begin{definition}
Under the same assumptions as the previous definition let $A$ be the inner hull, $B$ the actual hull and $C$ the outer hull. We define the \emph{inner error} as 
\[ \sup_{x\in B} d(x,A)\] 
and the \emph{outer error} as 
\[ \sup_{x\in C} d(x,B) \]
In both cases these are simply the Hausdorff distance defined above.
\end{definition}
\begin{remark}[Generating direction vectors]

In \cite{theiler2000using} we find an elegant method for producing direction vectors (called skewers). Not only does this speed up the production of direction vectors but more importantly it speeds up the computation of the dot products in the algorithm.  We are confident choosing vectors uniformly from the sphere works (Since it gives a precise measure of the curvature as we saw above). However alternative methods seem to work just as well experimentally (intuitively they are `uniform enough'). In \cite{chang2008pyramid} they have tested various methods similar to \cite{theiler2000using} and have singled out what they found to be the best approach. 
\end{remark}

\subsection{Sparse Approximation}
For many applications the above algorithm finds too many extreme vectors in the inner hull, and for almost any application the outer hull has far too many hyperplanes (there will be one hyperplane for each direction vectors). In this section we discuss how to deal with this problem, in particular how to reduce the following two ratios:

\begin{definition}
Let $V \subset \mathbb{R}^n$ be a finite collection of points. Suppose we have an algorithm $A$ that outputs vertices of a convex hull $A_{CH}$ that approximates $CH(V)$. Then the \emph{vertex compression ratio of algorithm $A$} is
\[ \frac{\text{Number of vertices in } A_{CH}}
{\text{Number of vertices in }  CH(V)}\]
On the other hand suppose $A$ outputs hyperplanes of a convex hull $A_{CH}$ that approximates $CH(V)$. Then the \emph{hyperplane compression ratio of algorithm $A$} is 
\[ \frac{\text{Number of hyperplanes in } A_{CH}}
{\text{Number of hyerplanes in } CH(V)}\]

\end{definition}

To reduces the vertex compression ratio of algorithm \ref{alg:Basic}  we run algorithm \ref{alg:cluster}. We simply find vectors that are clustered together and keep only one of them. 

\begin{algorithm}
\label{alg:cluster}
\SetKwInput{Input}{Input}
\SetKwInput{Output}{Output}

\caption {Vertex Compression Algorithm}
\Input  {The inputs and outputs of algorithm \ref{alg:Basic} and new threshold $\beta\geq 0$}
\Output {$V'' \subset V'$ an approximate convex hull}
\Begin
{
  Order $V':=\{ v_1,v_2...v_m\}$ in increasing order of relative $D$-curvature $K_D(v)$
  
 $i \leftarrow 1$
 
	\While {$i \leq m$}
	{
		\ForAll {$w \in V'$}
			{
			\If {  $d(v_i,w) < \beta$ } {remove $w$ from $V'$.}
			}
	Set $i$ to the next largest interger for which $v_i$ is not already removed from $V'$		
	}
The output is the remaining vectors of $V'$, call this $V''$
}
\end{algorithm}

\begin{remark}
Let $V,V',V'',\beta$ be as in algorithm \ref{alg:cluster} then 
\[d(CH(V'),CH(V'')) < \beta\]
\end{remark}
\begin{proof}
This is an immediate application of Lemma  \ref{lem:SimpHaus}
\end{proof}
Next we show how to reduce the hyperplane compression ration of our original algorithm. Let $V,D$ be as above, we start by running our original algorithm followed by the vertex compression algorithm. Suppose when running our original algorithm that for each $v\in V'$ we keep track of 
\[D_v := \{ d\in D \mid  v^\intercal d  \ge w^\intercal d  \text{ for all $w \in V$ } \} \] 
moreover suppose when running the vertex compression algorithm that for each $v\in V''$ we keep track of 
\[E_v := \{ w \in V' \big| w \text{ was removed during the step involving } v \}\cup\{v\}\]
(i.e. $E_v$ is all elements in $V'$ that were clustered around $v$). For $v\in V''$ let 
\[F_v := \cup_{w\in E_v} D_v \]

See algorithm \ref{alg:hyper} below.
\begin{algorithm}[ht] 
\label{alg:hyper}
\SetKwInput{Input}{Input}
\SetKwInput{Output}{Output}

\caption {Hyperplane Compression Algorithm}
\Input  {The inputs and outputs of  previous algorithms including values mentioned above \ref{alg:Basic}}
\Output {Constraints that determine an approximate convex hull to $V$ that contains $V$. There is a possibility the constraints will not define a convex hull if $|D|$ is too small.}
\Begin
{
 $F \leftarrow \emptyset$
  
		\ForAll {$v \in V''$}
			{
			Project $F_v$ into $\mathbb{R}^n$
			
			Run algorithm \ref{alg:Basic} and \ref{alg:cluster}
			
			Look at the output of \ref{alg:cluster}, and the corresponding unprojected vectors from $F_v$ into $F$
			
			}
			
			(Optional) Alternatively we could choose a threshold $\gamma$. For each $v\in V''$ and $d\in D$ compute $v^\intercal d$. If for any $d$ we find that there exists three distinct points $\{ v_1 , v_2, ..., v_n\}\subset V''$ such that \[ |v_i^\intercal d - v_j^\intercal d | < \gamma \] for all $1\leq  i,j \leq 3$ then we put $d$ into $F$. Unless stated otherwise we will assume this method is not used
			
			Find clusters of points in $F$. (collections of points that are close together)
			
			Replace each cluster with the weighted sum of elements in that cluster to get $F'$
			
			Run algorithm \ref{alg:Basic} with $\alpha = 0$ and $D = F'$ finding the outer hull. This is the output. 
			}
\end{algorithm}

If desired, algorithm \ref{alg:hyper} works without running the vertex compression algorithm (i.e. if $\beta = 0$). If one doesn't run the vertex compression algorithm \emph{and} finds the true convex hull of the projected $F_v$ then the result is equivalent to the outer hull found originally (this merely removes redundant hyperplanes)

We end this section by noting that hyperplane compression algorithm is a potential solution to the problem of finding endmembers (the vertices of the outputed hull are potential endmembers). This holds even without a pure pixel in the data set. In particular if the data is uniformly generated from some polytope and we wish to recover the polytope this method can be used.

\section{Application of the curvature estimate to the algorithm}

We can use the consistency result, Theorem~\ref{thm:consistency}, to show convergence of algorithm~\ref{alg:Basic}.
\begin{corollary}\label{cor:1}
Let $V \subset \mathbb{R}^n$ and let $D$ be uniformly sampled from $S^{n-1}$.  Assume that the $C(v) \geq \omega > 0$ for all extremal points in $V$.  
Then
\[
CH_D{V} = CH(V), \quad  \text{ with probability $P \ge 1-p$}
\]
provided 
\[
\abs{D} \geq\frac {\log (\omega p)}{\log(1-\omega)}.
\]
\end{corollary}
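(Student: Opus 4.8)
The plan is to read the equation $CH_D(V) = CH(V)$ as the assertion that the set of $D$-extremal points coincides with the genuine vertices (extreme points) of $CH(V)$, and to recognize that failure of this equality is essentially the event that some vertex is captured by none of the sampled directions. First I would dispose of one inclusion: for a uniform direction $d \in S^{n-1}$ the maximizer of $d^\intercal x$ over $V$ is almost surely unique and therefore a vertex of $CH(V)$, so with probability one $CH_D(V)$ contains only true vertices. Hence the only way to have $CH_D(V) \neq CH(V)$ is for at least one vertex $v$ to satisfy $D_v = \emptyset$, and it suffices to bound $\Prob\left(\exists \text{ vertex } v : D_v = \emptyset\right)$.

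Next I would analyze a single fixed vertex $v$. By definition $v \in CH_D(V)$ precisely when some $d \in D$ lies in the normal cone $N_v$, and a uniform $d \in S^{n-1}$ lands in $N_v$ with probability exactly $K(v)$. Since the $\abs{D}$ directions are independent, the probability that every one of them misses $N_v$ is $(1-K(v))^{\abs{D}} \le (1-\omega)^{\abs{D}}$, where the hypothesis $C(v) \ge \omega$ is read as the relative-curvature bound $K(v) \ge \omega$ (forced on us because the threshold formula involves $\omega$ and $1-\omega$ as probabilities).

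To convert this per-vertex estimate into a bound over all vertices I need to control their number $m$, and here I would reuse the covering observation from the proof of Theorem \ref{thm:consistency}: the cones $\{N_v\}$ tile $S^{n-1}$ with pairwise intersections of measure zero, so $\sum_v K(v) = 1$; combined with $K(v) \ge \omega$ this forces $m \le 1/\omega$. A union bound then gives
\[
\Prob\left(CH_D(V) \neq CH(V)\right) \le m(1-\omega)^{\abs{D}} \le \frac{1}{\omega}(1-\omega)^{\abs{D}}.
\]
Requiring the right-hand side to be at most $p$ and solving for $\abs{D}$, by taking logarithms and remembering that $\log(1-\omega) < 0$ reverses the inequality, yields exactly $\abs{D} \ge \log(\omega p)/\log(1-\omega)$.

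The hard part will not be the probabilistic estimate, which is a routine independence-plus-union-bound computation, but rather pinning down the two structural facts that produce the clean threshold with no stray vertex count appearing: that $CH_D(V)$ almost surely contains only genuine vertices, so the equality can fail only by omission and never by spurious inclusion; and that the curvature hypothesis caps the number of vertices at $1/\omega$ via the partition-of-the-sphere identity $\sum_v K(v) = 1$. Both facts are inherited from the geometry already set up for Theorem \ref{thm:consistency}, so the real work is in invoking them carefully rather than in establishing new estimates.
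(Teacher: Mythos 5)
Your proposal is correct and follows essentially the same route as the paper's own proof: a per-vertex miss probability of $(1-\omega)^{\abs{D}}$, the bound of at most $1/\omega$ vertices coming from the fact that the relative curvatures sum to $1$, a union bound, and then solving $\frac{1}{\omega}(1-\omega)^{\abs{D}} \le p$ for $\abs{D}$. Your write-up is in fact somewhat more careful than the paper's (which leaves the no-spurious-vertices direction and the $1/\omega$ count implicit, and contains a harmless slip in the direction of the final inequality before stating the equivalent threshold), but the underlying argument is identical.
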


\begin{proof} 
We have seen from the proof above that the probability of not finding a point with relative curvature $\omega$ is $(1-\omega)^{\Abs{D}}$.Since there are at most $\frac{1}{\omega}$ such points we have by subadditivity of probabilities that the probability of missing one of them is $\leq \frac{1}{\omega} (1-\omega)^{\Abs{D}}$ (for an exact answer use inclusion-exclusion) Suppose $p$ is less then or equal to this: $p\leq \frac{1}{\omega} (1-\omega)^{\Abs{D}}$. This is trivially equivalent to $\Abs{D}\geq\frac {\log (\omega p)}{\log(1-\omega)}$ as required
\end{proof}

\begin{remark}
	The function $\frac {\log (\omega p)}{\log(1-\omega)}$ in the estimate above is $O(1/\omega)$ as case be seen in   Figure~\ref{fig:cor}.
\end{remark}

\begin{figure}[h]
\includegraphics[scale= .35]{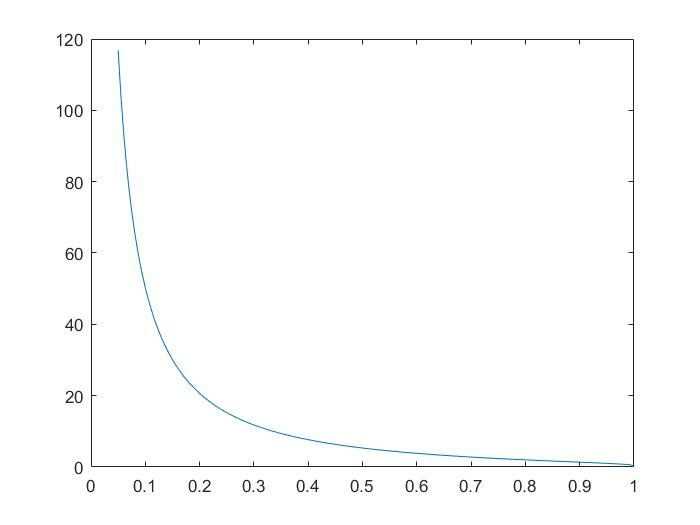}
\caption{Illustration of the Corollary~\ref{cor:1}. 
The  number of direction vectors required to find all points with relative curvature $x$ with probability 95\% i.e. $y=\frac {\log ( .05x)}{\log(1-x)}$ }
\label{fig:cor}
\end{figure}

Another consistency result is as follows.

\begin{corollary}\label{cor:consis}
Let $V \subset \mathbb{R}^n$. Let $D$ be sampled uniformly from $S^{n-1}$. 
Let  $O_k$ and $I_k$ be the outer and inner error respectively. Then $\{O_k \}_{k\in\mathbb{N}}$ and $\{I_k \}_{k\in\mathbb{N}}$ are non increasing sequences and with probability $1$ they both converge to zero.
\end{corollary}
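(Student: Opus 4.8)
The plan is to handle the two sequences separately, in each case first deducing monotonicity from the nesting of the direction sets and then proving almost sure convergence. Write $D_k=\{d_1,\dots,d_k\}$ for the first $k$ sampled directions (so $D_k\subseteq D_{k+1}$), put $B=CH(V)$, let $A_k$ and $C_k$ be the inner and outer hulls built from $D_k$, let $V'_k$ be the set of kept vertices (so $A_k=CH(V'_k)$), and let $h(d)=\max_{v\in V}v^\intercal d$ be the support function, so that $A_k\subseteq B\subseteq C_k$ and $C_k=\{x: d^\intercal x\le h(d)\text{ for all }d\in D_k\}$. Since $V$ is finite, fix $r$ with $V\subseteq B_r$. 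For monotonicity I would note that adding a direction never removes a point from $V'_k$ but may add one, so $A_k\subseteq A_{k+1}\subseteq B$; as distance to a larger set only decreases, $I_k=\sup_{x\in B}d(x,A_k)$ is non-increasing. Dually, each new direction only appends a constraint that is valid for $B$, so $C_{k+1}\subseteq C_k$ with $B\subseteq C_k$, hence $O_k=\sup_{x\in C_k}d(x,B)$ is non-increasing.

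For the inner error (the easy half) I would apply Lemma \ref{lem:SimpHaus} with $S=B$ to get $I_k=\max_j d(v^{(j)},A_k)$ over the finitely many vertices $v^{(j)}$ of $B$. Every vertex of a polytope has a full-dimensional normal cone, hence positive curvature $K(v^{(j)})>0$. Exactly as in the proof of Corollary \ref{cor:1}, the probability that some vertex is still unselected after $k$ directions is at most $\sum_j (1-K(v^{(j)}))^k\to 0$, so almost surely there is a finite random time $T$ after which every vertex lies in $V'_k\subseteq A_k$; for $k\ge T$ we have $A_k=B$ and $I_k=0$. With monotonicity this gives $I_k\to 0$ almost surely.

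The outer error is the main obstacle, because $C_k$ may be unbounded for small $k$, so a Lipschitz bound whose constant grows with $\|x\|$ is useless directly. I would fix $\theta\in(0,\pi/3)$ and argue in two steps. First, a covering/Borel--Cantelli step: cover $S^{n-1}$ by finitely many caps of angular radius $\theta/2$; each has positive measure, so almost surely each is eventually hit, after which $D_k$ is a $\theta$-net of the sphere. Second, a boundedness step: once $D_k$ is a $\theta$-net, any $x\in C_k$ obeys $d^\intercal x\le h(d)\le r$ for the sampled direction within angle $\theta$ of $x/\|x\|$, forcing $\|x\|\le r/\cos\theta\le 2r$, so $C_k\subseteq B_{2r}$. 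On this fixed ball the estimate becomes uniform: for $x\in C_k$ choose the unit $d^*$ with $d(x,B)=d^{*\intercal}x-h(d^*)$ and $d\in D_k$ with $\|d-d^*\|\le\theta$; subtracting the valid inequality $d^\intercal x-h(d)\le 0$ and using $|h(d^*)-h(d)|\le r\|d^*-d\|$ together with $\|x\|\le 2r$ yields $d(x,B)\le 3r\theta$, hence $O_k\le 3r\theta$.

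Finally I would intersect the almost sure events over $\theta=1/m$, $m\in\mathbb{N}$, to get $\limsup_k O_k\le 3r/m$ for every $m$, whence $O_k\to 0$ almost surely. The crux, and the step I expect to demand the most care, is precisely this outer bound: one must first rule out $C_k$ escaping to infinity before any Lipschitz comparison applies, which is exactly what the $\theta$-net boundedness step supplies; the support-function Lipschitz estimate and the density of the sample are then routine.
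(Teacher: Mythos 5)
Your proof is correct, but the heart of it — the outer error — follows a genuinely different route from the paper's. The paper's proof handles monotonicity and the inner error essentially as you do (your version of the inner step, reducing to the finitely many vertices via Lemma \ref{lem:SimpHaus} and union-bounding the probability $\sum_j (1-K(v^{(j)}))^k$ that some vertex is still missing, is a sharpened write-up of the same idea). For the outer error, however, the paper stays inside its curvature framework: it argues that, almost surely, the curvature of each true extreme point $v\in V$ with respect to the outer hull $C_k$ converges to its curvature with respect to $CH(V)$ (via a density-of-directions argument over rational balls and countable subadditivity), deduces that the total curvature carried by the spurious extreme points of $C_k$ vanishes, and then invokes Theorem \ref{thm:Alex} (Aleksandrov's maximum principle) to convert vanishing curvature into vanishing Hausdorff distance. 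You avoid curvature and Theorem \ref{thm:Alex} entirely, replacing them with a covering argument (almost surely $D_k$ is eventually a $\theta$-net for each $\theta=1/m$), a boundedness step forcing $C_k\subseteq B_{2r}$, and a support-function Lipschitz estimate giving $O_k\le 3r\theta$. Each approach has its merits: the paper's reuses its central geometric estimate and keeps the curvature theme coherent; yours is more elementary and quantitative (an explicit bound once the net property holds), and — importantly — it explicitly rules out $C_k$ being unbounded or having extreme points far outside $B_r$, an issue the paper's application of Theorem \ref{thm:Alex} silently glosses over, since that theorem is stated for points lying in $B_r$ while the outer hull's extreme points need not lie in the ball containing $V$. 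In that respect your argument is not just different but actually closes a gap in the paper's own proof.
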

\begin{proof}
It is clear that $\{O_k \}_{k\in\mathbb{N}}$ and $\{I_k \}_{k\in\mathbb{N}}$ are non increasing sequences since each new direction vector adds one more constraint (possibly lowering $O_k$ and may or may not find a new extreme vector (possibly lowering $I_k$).

It is also clear from above that $I_k$ approaches $0$. Indeed each extreme vector has non zero curvature and hence a non zero probability of being found so for large enough $k$ we expect $I_k = 0$ with increasing probability.

Finally we can use theorem \ref{thm:Alex} to show with probability $1$ that $O_k$ converges to zero. As before we know that with probability $1$ all extreme points will be found. The extreme points of the outer hulls are made up of $V$ and some other vectors. For each extreme point $v\in V$, we claim the curvature of $v$ with respect to the actual hull $CH(V)$ approaches the curvature of $v$ with respect to the outer hull as $k\rightarrow \infty$ with probability $1$. To show this let $E$ be the set of all sequences $(d_1,d_2,...) \in (S^{n-1})^\infty$ such that the claim fails. We wish to show the measure of $E$ is zero. Let $N_v \subset S^{n-1}$ be the collection of direction vectors for which $v$ is extremal and let $\{ d_1 ',d_2 ' ... \}$ be the subsequence of $d_i$ that belong to $N_v$. Now if each point in $N_v$ is a limit point of $\cup_n CH(\{ d_1 ' ,...d_n '\} )$ the claim clearly follows. Therefore for each sequence in $E$ there exists a rational point $q\in N_v$ and a rational $r>0$ such that $B_r(a)\cap \cup_n \CH(\{ d_1 ',...d_n '\} ) = \emptyset$. Let $E_a,r$ be all sequences for which this holds, now since $N_v \cap B_r(a)$ has positive measure then by definition of product measure the measure of $E_a,r$ is zero. Since $E$ belongs to the union of all $E_a,r$ the claim follows from countable subadditivity of measure.

 Now since the sum of the curvatures of each extreme point of any polytope add up to $\vol(S^{n-1})$ the claim implies that all extreme points of the outer hulls that are not in $V$ must have a vanishing proportion of the total curvature as $k\rightarrow \infty$. Theorem  \ref{thm:Alex} completes the proof.
\end{proof}
It is easy to compute some controls on the convergence of the inner error. Equivalenty this gives a worst case calculation for choosing the appropriate number of  direction vectors to achieve a desired error. This is a \emph{theoretical} answer to a problem raised in \cite{chaudhry2006pixel}, how to choose the number of direction vectors for PPI algorithms. It's worth mentioning the practical solutions from \cite{chaudhry2006pixel} for choosing the number of direction vectors. Essentially they suggest computing the maximum for only a small block of direction vectors and then repeating this process until no new extreme vectors are found. This way there is no need for human input about the choice of how many direction vectors to use. 

\begin{remark}{\emph{Number of direction vectors needed for a given inner error (simple messy bound)}}
We can use the above to find the number of direction vectors needed to achieve a 
particular inner error $\epsilon >0$. Let $V$ be a set of points in ball of radius $r$ and let $X$ be the true extreme points of $\CH(V)$.   Suppose we run our algorithm with some set of direction vectors uniformly chosen and we keep all extreme points found.  The error depends on the total amount of curvature of all the extreme points we've missed. Using \ref{cor:1} we can ensure with probability $\geq 1-p$ that we have all points of relative curvature $\geq \omega$ for any $0\leq \omega \leq 1$. Then the total missing curvature would be $\leq |X|\omega \vol(S^{n-1})$. By \ref{thm:Alex} this will give an error 
\[\leq \sqrt{2} \pi r\left (2|X|\omega \vol(S^{n-1})\right )^\frac{1}{n-1}\]. 

We can set this error $\leq \epsilon$ and solve for $\omega$ to get 
\[ \omega \leq \frac{(\frac{\epsilon}{\sqrt{2} \pi r})^{n-1}} { 2|X| \left (\vol(S^{n-1})\right )}\]
Denote the right hand side by $C$. From the result above to achieve this $\omega$ we would require 
\[\frac {\log (Cp) }{ \log (1-C)}\]
direction vectors. To sum up, if the number of directions is more than $\log (Cp) / \log(1-C) $ then with probability $1-p$ the error is bounded by $\epsilon$. 

The growth of this is comparable to $\frac{\vol(S^{n-1})}{(\frac{\epsilon}{\sqrt{2} \pi r})^{n-1}}$. This unfortunately gives results that are very large. For example keeping $p=.05$,$\epsilon = .1$, $r=1$,  $|X|=10000$,  and varying the dimension $n = 3,4,5,6,7$ we get roughly $10^{10},10^{11},10^{13},10^{15},10^{17}$ respectively.
\end{remark}

\begin{remark}{\emph{Improvement on previous remark}}
The previous remark doesn't relate to the inner error we expect in practice. For one thing, of course we don't claim the constants found above are in any way optimal but more importantly if we remove many low curvature points we don't expect this to be equivalent to the worst case where all the curvature is concentrated in one point. This is especially true since we are taking our directions uniformly from the sphere. In many cases if we have found all points with relative curvature $\geq \omega$ then removing all other points results in an error closer to $\leq  \sqrt{2} \pi r (2 \omega \vol(S^{n-1}) ) ^\frac{1}{n-1}$ (the curvatures don't add up). This means we can set $|X|=1$ in the calculation above. So if we consider as before  $p=.05$,$\epsilon = .1$, $r=1$, and vary the dimension $n = 3,4,5,6,7$ we get roughly $10^5,10^7,10^9,10^{11},10^{13}$ respectively. 

\end{remark}

\section{Computational Examples}
\subsection{Synthetic data}
In this section we generate synthetic data to test the algorithm. 

For points generated uniformly from a simplex most curvature is in the corners.  So already a random uniform direction vector will likely find a corner. Our procedure is effective in this case. 
\begin{figure}[h] 
\includegraphics[scale= .35]{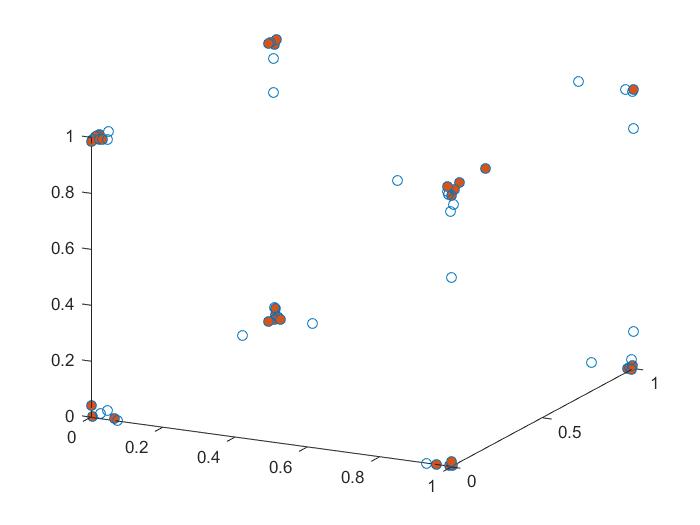}
 \caption{Recovering the vertices of the cube in three dimensions.}
\label{fig:cube}
\end{figure}

Consider Figure \ref{fig:cube}. This is 1000 direction vectors and a million points. The algorithm found 61 extreme vectors (unfilled dots) and kept 34(filled dots). The actual convex hull has around 300 extreme points. Clearly clustering (algorithm \ref{alg:cluster} would be very effective in this case.

On the other hand for a million points generated uniformly from the sphere we do very poorly (Figure \ref{fig:sphere}). Here all the direction vectors found different extreme points. Only 102 were kept. (This can be slightly rectified by modifying step 4 at the expense of worse performance on examples with sharp corners).

\begin{figure}[h]
\includegraphics[scale= .25]{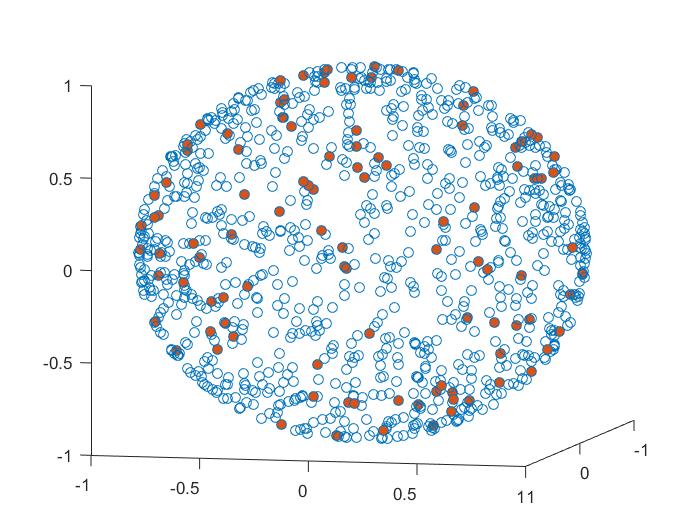}\includegraphics[scale= .25]{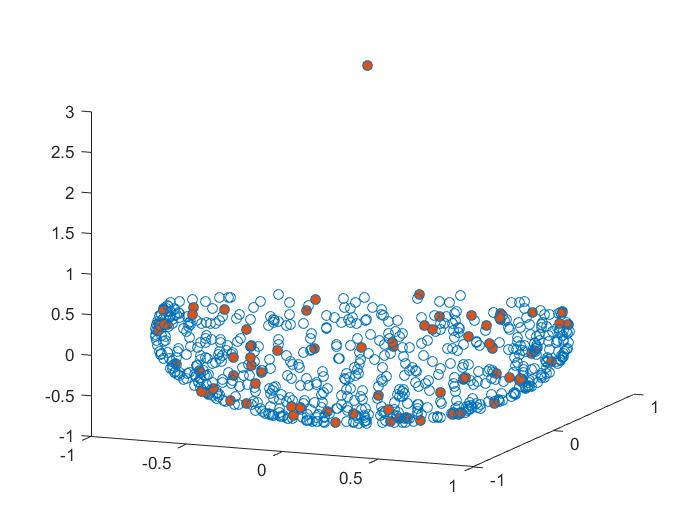}
\caption{Sphere and Ice Cream Cone (isolated extreme point).}
\label{fig:sphere}\label{fig:cream}
\end{figure}

Figure \ref{fig:cream} is an example of a mix of low curvature points with one very high curvature point. In practice much data is such a mix, so we expect high performance in some region and low performance in others.

Figure~\ref{fig:cubedat} studies the accuracy of the algorithm as a function of the number of direction vectors. We generated 10 thousand 3 dimensional points and applied the algorithm with an increasing number of direction vectors, keeping every extreme point we found. The figure shows how the outer error (yellow) and the inner error (blue) decrease as the number of found extreme vectors increases. The points in the first two images were generated randomly from the cube and from the sphere. The points in the third  randomly generated points from a simplex, and the we applied a fixed linear transformation. (Note for the sphere the outer error is only computed to the first digit due to the large number of computations involved).  Figure~\ref{fig:cubedat}(d) shows a similar computation of the inner error using 1 million points in 3 dimensions. This time the error is compared to the number of extreme vectors found.

\begin{figure}
\includegraphics[scale = .35]{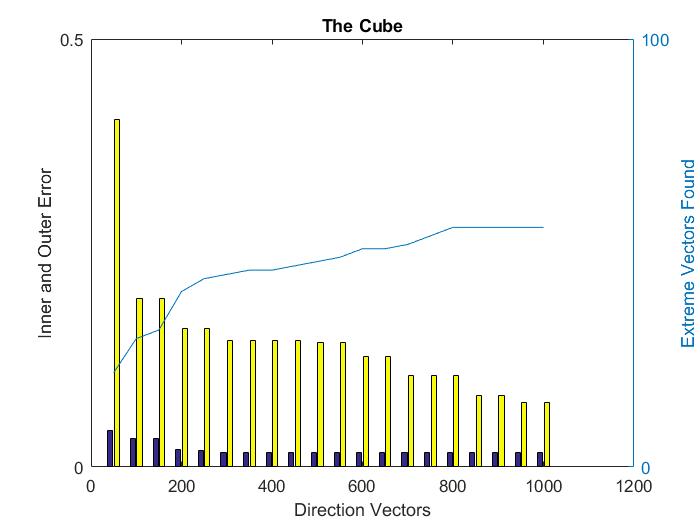}\includegraphics[scale = .35]{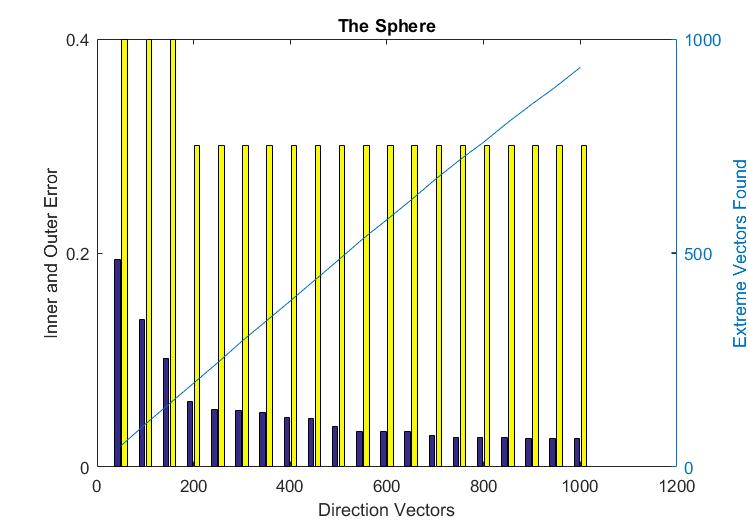} 
\includegraphics[scale = .35]{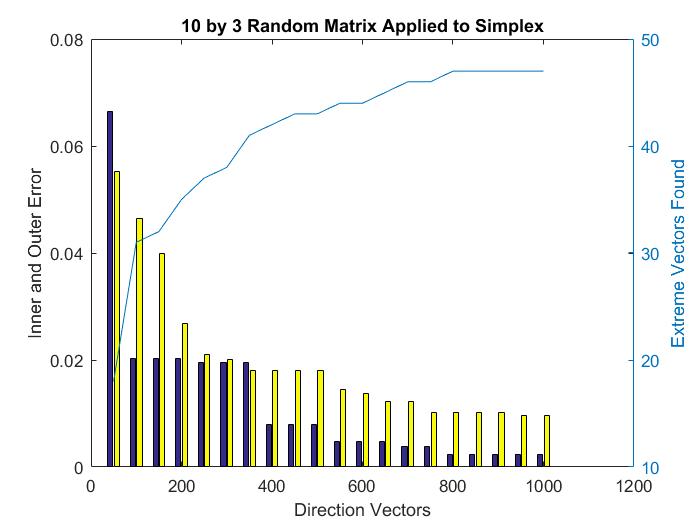}\includegraphics[scale = .35]{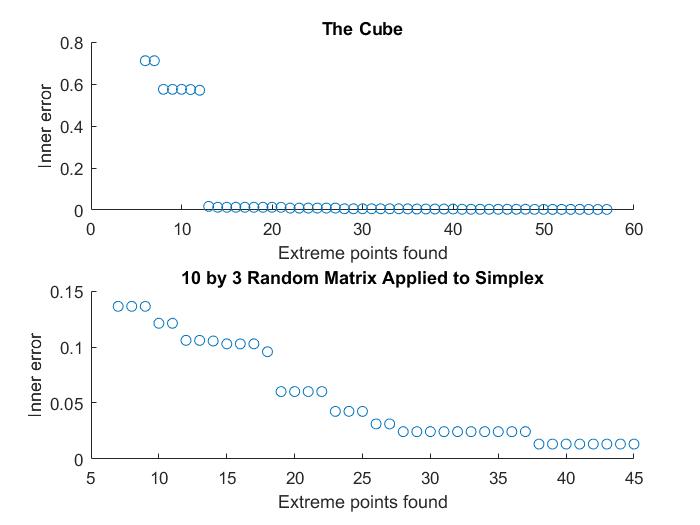}
\caption{Comparison of inner and outer error, along with number of extreme points found. Points sampled uniformly from 3 dimensional (a) cube (b) sphere (c) linearly transformed simplex using 10,000 points.  (d) Inner error using 1 million points. }
\label{fig:cubedat}
\end{figure}

%
%

\subsection{Algorithm Applied to Helicopter Flight Test data set.}
In this section, we demonstrate the results of the vertex compression algorithm on a real dataset consisting of 23 million data points in dimension 5. We used 70000 direction vectors  In Figure~\ref{fig:vertcomp_allfound}(a) we compare the 1491 found points (red filled) with the 2495 points on the true convex hull. There is an error of around 300, for context the width of the shape is around 66000. Moreover the average distance from the true extreme points to the mean of the extreme points is 32000. The figure has been projected into 2 dimensions.    After  running the vertex compression algorithm we were left with 29 points \ref{fig:vertcomp_allfound}(b). The error here is 6800, for context recall the information above.  

\begin{figure} [ht]
\includegraphics[scale= .3]{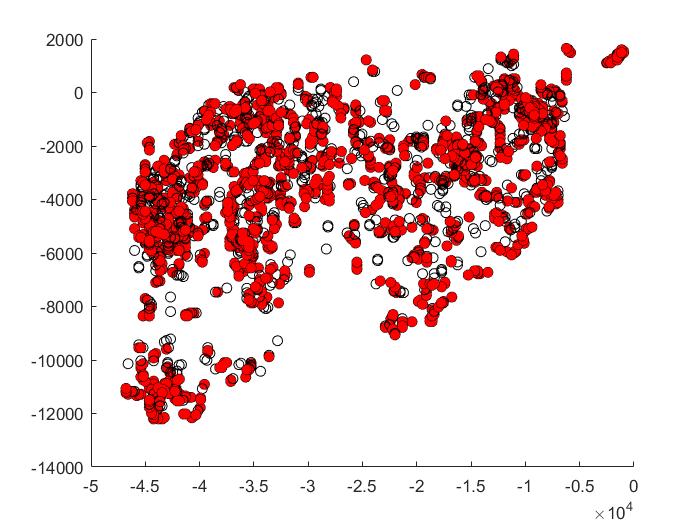}\includegraphics[scale= .3]{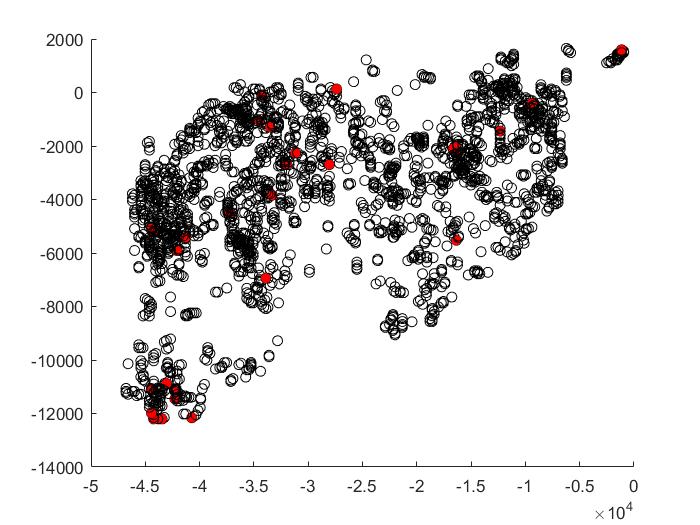}
\caption{Left: the actual hull versus the points found.  Right: the final 29 extreme vertices.}
\label{fig:vertcomp_allfound}
\end{figure}

%

\bibliographystyle{alpha}

\bibliography{ACH}

\end{document}